\documentclass[submission,copyright,creativecommons]{eptcs}

\usepackage{breakurl}             
\usepackage{underscore}           

\title{Anti-Path Cover on Sparse Graph Classes\thanks{The work was supported by the project SVV-2016-260332.}}
\author{Pavel Dvo\v{r}\'{a}k
\institute{Computer Science Institute of Charles University\\Charles University\\ Prague, Czech Republic}
\email{koblich@iuuk.mff.cuni.cz}
\and
Du\v{s}an Knop\thanks{
Author was supported by the project GAUK 1784214 and by the project CE-ITI P202/12/G061}
\institute{Department of Applied Mathematics\\Charles University\\ Prague, Czech Republic}
\email{knop@kam.mff.cuni.cz}
\and
Tom\'{a}\v{s} Masa\v{r}\'{i}k\thanks{
Author was supported by the project GAUK 338216 and by the project CE-ITI P202/12/G061. }
\institute{Department of Applied Mathematics\\Charles University\\ Prague, Czech Republic}
\email{masarik@kam.mff.cuni.cz}
}

\usepackage[utf8]{inputenc}
\usepackage[english]{babel}
\usepackage{etoolbox}
\usepackage{xspace}
\usepackage{amsthm}

\newcounter{cptTh}
\setcounter{cptTh}{0}
\newtheorem{theorem}[cptTh]{Theorem}
\newtheorem{proposition}[cptTh]{Proposition}
\newtheorem{lemma}[cptTh]{Lemma}

\newtheorem{definition}[cptTh]{Definition}


\newcommand{\nd}{\ensuremath{\mathop{\mathrm{nd}}}}

\newcommand{\tw}{\ensuremath{\mathop{\mathrm{tw}}}}
\newcommand{\td}{\ensuremath{\mathop{\mathrm{td}}}}

\newcommand{\FPT}{\ensuremath{\mathsf{FPT}}\xspace}

\newcommand{\MSOt}{{$\mathsf{MSO}_2$}\xspace}

\newcommand{\dist}{\ensuremath{\mathop{\mathrm{d}}}}


%

\newcommand{\Trule}{\rule{0pt}{3ex}}
\newcommand{\Brule}{\rule[-1.5ex]{0pt}{0pt}}

\newcommand{\prob}[3]{
\begin{center}
\begin{tabular} {|ll|}
	\hline
	\multicolumn{2}{|c|}{#1\Trule} \\
	{\bf Input:\enspace}&{\parbox[t]{34em}{#2}}\\
	{\bf Question:\enspace}&\parbox[t]{34em}{#3\Brule}\\
	\hline
\end{tabular}
\end{center}
\vspace{-3pt}
}

\newcommand{\kPathCover}{{\sc $k$-Path Cover}\xspace}
\newcommand{\kAntiPathCover}{{\sc $k$-Anti-Path Cover}\xspace}
\newcommand{\cG}{\ensuremath{\overline{G}}}



\begin{document}
\maketitle

\begin{abstract}
\begin{abstract}
We show that it is possible to use Bondy-Chvátal closure to design an \FPT algorithm that decides whether or not it is possible to cover vertices of an input graph by at most $k$ vertex disjoint paths in the complement of the input graph. \\

More precisely, we show that if a graph has tree-width at most $w$ and its complement is closed under Bondy-Chvátal closure, then it is possible to bound neighborhood diversity of the complement by a function of $w$ only. \\

A simpler proof where tree-depth is used instead of tree-width is also presented.
\end{abstract}

\end{abstract}



\section{Introduction}

Graph Hamiltonian properties are studied especially in connection with graph connectivity properties. A graph is called {\em Hamiltonian} if there is a path passing through all its vertices in that graph. In this work we are interested in sparse graph setting for which this question was already solved e.g.\ using the famous theorem of Courcelle~\cite{courcelle90}. It is possible to express the Hamiltonian property by an \MSOt formula and thus resolve the question by an \FPT algorithm. For a graph~$G$ and a positive integer $k$ we say that $G$ is {\em $k$-path coverable} if there exists a collection of at most $k$ vertex disjoint paths in~$G$ such that the vertices of~$G$ are the union of vertices of all paths in the collection (that is, each vertex belongs to exactly one path in the collection). We give a simple, but interesting, twist to the question to rise a new problem:

\prob{\kAntiPathCover}{
A graph $G$ and positive integer $k$.
}{
Is the complement of graph $G$ $k$-path coverable? 
}

In this notation {\sc Hamiltonian Anti-Path} problem is exactly the {\sc $1$-Anti-Path Cover} problem.
It is not hard to see that we may focus on solving the {\sc Hamiltonian Anti-Path} as the \kAntiPathCover problem is reducible to the {\sc Hamiltonian Anti-Path} by addition of $k$ isolated vertices (apex vertices in the complement graph). Note that this does not affect tree-width nor tree-depth.
Indeed, for unrestricted graphs this question is solved on complement of the graph. However, this is not possible if input graphs are restricted so that some specified parameter is bounded. Two interesting examples are tree-width and tree-depth. Note that tree-width of a complement of a graph cannot be bounded by a function of the tree-width of the graph. On the contrary, there are graph parameters in whose this question was already solved, as these graph parameters are (for each constant) closed under taking complements -- neighborhood diversity and modular width, to name just a few. On both these parameters \kPathCover was one of the first considered problems which was showed to be in the \FPT class~\cite{gajarskyLO13}, \cite{lampis12} respectively.

There is a strong connection between the {\sc $L(2,1)$-labeling} problem and {\sc Hamiltonian Anti-Path}. $L(2,1)$-labeling is a labeling of vertices of a graph $G$ so that labels of vertices in distance 1 differ by at least 2, while labels of vertices in distance 2 differ by at least 1. The labels are taken from set $\{0,\ldots,\lambda\}$ and $\lambda$ is called a span. It not hard to see that a graph $G$ with an apex vertex added admits $L(2,1)$-labeling with span $\lambda = n$ if and only if $G$ has Hamiltonian Anti-Path~\cite{bodlaenderKTvL04}.


\paragraph{Our Contribution}
\begin{theorem}\label{thm:kapc-tw}
Let $G$ be a graph of tree-width at most $w$.
The problem of \kAntiPathCover admits an \FPT algorithm parameterized by tree-width.
\end{theorem}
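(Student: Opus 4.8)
The plan is to leverage the reduction already mentioned in the excerpt (that \kAntiPathCover reduces to {\sc Hamiltonian Anti-Path} by adding $k$ isolated vertices to $G$, which does not change tree-width), so that it suffices to decide whether $\cG$ has a Hamiltonian path. The key idea is the Bondy--Chvátal closure: the abstract says we show that if a graph has tree-width at most $w$ and its complement is closed under Bondy--Chvátal closure, then the neighborhood diversity of the complement is bounded by a function of $w$. So the strategy is a two-phase algorithm.

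First I would take the complement $\cG$ and compute its Bondy--Chvátal closure $\mathrm{cl}(\cG)$. The Bondy--Chvátal theorem guarantees that $\cG$ is Hamiltonian (has a Hamiltonian cycle, and with the apex-vertex trick a Hamiltonian path) if and only if its closure $\mathrm{cl}(\cG)$ is, so passing to the closure preserves the answer to the problem. Crucially, the closure is by construction closed under the Bondy--Chvátal operation. I would then invoke the structural result advertised in the abstract: since $G$ has tree-width at most $w$ and the complement of the closed graph is (a supergraph whose complement is) closed under Bondy--Chvátal closure, the neighborhood diversity $\nd(\mathrm{cl}(\cG))$ is bounded by some function $f(w)$.

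Once neighborhood diversity is bounded by $f(w)$, I would apply the known \FPT result for \kPathCover (equivalently Hamiltonicity) parameterized by neighborhood diversity, cited in the excerpt as~\cite{lampis12}. Since $\mathrm{cl}(\cG)$ has bounded neighborhood diversity, testing whether it has a Hamiltonian path can be done in \FPT time in the parameter $f(w)$, hence in \FPT time in $w$. The overall algorithm is therefore: add the $k$ apex vertices, complement, take the Bondy--Chvátal closure, bound its neighborhood diversity by $f(w)$, and run the neighborhood-diversity Hamiltonicity algorithm. Each step runs in polynomial time except the final \FPT step, giving an \FPT algorithm parameterized by $w$.

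The main obstacle is establishing the structural bound $\nd(\mathrm{cl}(\cG)) \le f(w)$: one must argue that closing $\cG$ under the Bondy--Chvátal operation forces the graph into a highly structured form with only few distinct neighborhood types, and that the number of these types is controlled purely by the tree-width $w$ of the original graph $G$ (not of $\cG$, whose tree-width need not be bounded at all). I expect this to require a careful analysis of vertex degrees in $\cG$ in terms of the sparsity of $G$: since $G$ has bounded tree-width it has bounded average degree, so most vertices of $\cG$ have very high degree, and the Bondy--Chvátal degree condition then forces large blocks of mutually adjacent, twin-like vertices in the closure. Turning this intuition into a precise bound on the number of neighborhood-diversity classes is the technical heart of the argument.
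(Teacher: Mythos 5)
Your proposal matches the paper's own derivation of this theorem essentially step for step: reduce \kAntiPathCover to {\sc Hamiltonian Anti-Path} by adding $k$ isolated vertices (tree-width unaffected), pass to the Bondy--Chv\'{a}tal closure of $\cG$ (answer-preserving, and the complement of the closed graph is a subgraph of $G$, so its tree-width stays at most $w$), invoke the structural bound $\nd(\mathrm{cl}(\cG)) \le 2^{k}+k$ with $k = 2(w^2+w)$ (the paper's Theorem~\ref{thm:treewidth}), and finish with Lampis's \FPT algorithm for \kPathCover parameterized by neighborhood diversity. The only divergence is your closing intuition for how one would prove the structural bound (average degree forcing twin-like blocks); the paper instead shows that every edge of $G$ has an endpoint of $G$-degree at least $n/2$, that these high-degree vertices form a vertex cover of $G$ of size at most $2(w^2+w)$ by double counting over tree-decomposition bags, and hence $\cG$ is within bounded vertex-deletion distance of a clique---but since that bound is a separately stated theorem rather than part of this statement's proof, your derivation is the same as the paper's.
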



The proof uses the famous closure theorem of Bondy and Chvátal. Most notably we prove the following theorem from which we can derive the previous theorem using e.g.\ known \FPT algorithm of Lampis~\cite{lampis12}.
This exploit an unexpected relation between tree-width and neighborhood diversity. Thus, it naturally rises many questions -- whether similar approach is admissible for other problems besides \kAntiPathCover problem.

\begin{theorem}\label{thm:treewidth}
Let~$G$ be a~graph of tree-width $w$ and further complement~$\cG$ closed under Bondy-Chvátal closure.
It follows that neighborhood diversity of~$\cG$ is bounded by $2^{k} + k$ where $k = 2(w^2 + w)$.
\end{theorem}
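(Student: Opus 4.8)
The plan is to bound the neighborhood diversity of $G$ itself and then transfer the bound to $\cG$. This transfer is free: two vertices $u,v$ satisfy $N_G(u)\setminus\{v\}=N_G(v)\setminus\{u\}$ if and only if $N_{\cG}(u)\setminus\{v\}=N_{\cG}(v)\setminus\{u\}$, so $G$ and $\cG$ admit exactly the same partition into types and $\nd(\cG)=\nd(G)$. Hence it suffices to exhibit a partition of $V(G)$ into at most $2^{k}+k$ types. The first step is to rewrite the closure hypothesis as a statement about $G$: since $\deg_{\cG}(x)=n-1-\deg_G(x)$ (with $n=|V(G)|$), the fact that $\cG$ is closed under the Bondy--Chv\'atal operation, i.e.\ $\deg_{\cG}(u)+\deg_{\cG}(v)<n$ for every pair of $\cG$-nonadjacent vertices, says exactly that every edge $uv\in E(G)$ satisfies $\deg_G(u)+\deg_G(v)\ge n-1$.

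Second, I would split $V(G)$ according to degree in $G$. Let $H=\{v:\deg_G(v)\ge (n-1)/2\}$ be the high-degree vertices and $L=V(G)\setminus H$ the rest. The degree reformulation immediately forces $L$ to be independent in $G$: if $u,v\in L$ were adjacent then $\deg_G(u)+\deg_G(v)<n-1$, contradicting the closure condition. Because $L$ is independent, every $v\in L$ has $N_G(v)\seq H$, so the type of $v$ is completely determined by the subset $N_G(v)\seq H$; distinct types among $L$ therefore number at most $2^{|H|}$. Adding at most $|H|$ further types for the vertices of $H$ (each taken as its own type), one gets $\nd(G)\le 2^{|H|}+|H|$. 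Since $x\mapsto 2^{x}+x$ is increasing, it now remains only to bound $|H|$ by $k$.

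Third, I would bound $|H|$ using the sparsity of bounded-tree-width graphs. A graph of tree-width $w$ has at most $wn-\binom{w+1}{2}$ edges, so $\sum_{v}\deg_G(v)=2|E(G)|\le 2wn$. Each vertex of $H$ contributes at least $(n-1)/2$ to this sum, whence $|H|\cdot\frac{n-1}{2}\le 2wn$ and $|H|=\bigO{w}$, comfortably below $k=2(w^2+w)$; the slack in $k$ absorbs the rounding in the threshold $(n-1)/2$ and the trivial corner case of small graphs, where $n\le k$ already gives $\nd(\cG)\le n\le 2^k+k$.

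I expect the main obstacle to be the bookkeeping in the third step rather than any conceptual difficulty: one must choose the high/low threshold so that it simultaneously guarantees independence of $L$ (needed for the $2^{|H|}$ count) and keeps $|H|$ small, and then verify the degree-sum inequality uniformly in $n$, separating off the finitely many small graphs. Everything else---complement invariance of neighborhood diversity, the degree reformulation of the closure hypothesis, and the type count---is routine once the partition is set up.
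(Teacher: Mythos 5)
Your proposal is correct, and while it follows the same skeleton as the paper's proof, it executes the decisive counting step by a different and in fact stronger argument. Both proofs rest on the same structural observation: closedness of $\cG$ forces every edge of $G$ to have degree sum roughly $n$, hence the low-degree vertices form an independent set in $G$ (a clique in $\cG$), and everything reduces to bounding the number of high-degree vertices. The paper does this in $\cG$: the high-degree endpoints of edges form a vertex cover $V_1$ of $G$, deleting $V_1$ from $\cG$ leaves a clique, and $|V_1|$ is bounded by double-counting vertex--bag incidences in a tree decomposition with $O(n)$ bags (a vertex of degree at least $n/2$ lies in at least $n/(2w)$ bags, while there are at most $n(w+1)$ incidences in total), giving $2(w^2+w)$. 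You instead stay in $G$, via complement-invariance of neighborhood diversity, and bound $|H|$ by the edge count $|E(G)|\le wn-{w+1 \choose 2}$ of graphs of tree-width $w$. This is more elementary --- no tree decomposition is touched at all --- and it yields $|H|\le 4w$, linear rather than quadratic in $w$, so your route proves the theorem even with $k=4w$. Incidentally, your degree reformulation $\deg_G(u)+\deg_G(v)\ge n-1$ for every edge $uv\in E(G)$ is the correct one; the paper asserts $\ge n$, which is off by one.

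One bookkeeping repair is needed, exactly where you predicted trouble. After quoting the edge bound you loosen it to $2|E(G)|\le 2wn$, which only gives $|H|\le 4wn/(n-1)$; combined with your trivial-case threshold $n\le k$ this leaves a single uncovered case, $w=1$ and $n=5$, where $4wn/(n-1)=5>4=k$. Either keep the subtracted term you already quoted --- then $|H|\le 4w-2w(w-1)/(n-1)\le 4w\le k$ for every $n>k$ (note that $n>k$ guarantees $n\ge w+1$, so the edge bound is applicable) --- or raise the trivial threshold from $n\le k$ to $n\le 2^k+k$ and use $\nd(\cG)\le n$ there. Either fix is one line.
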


Furthermore, we give a natural specializations of the theorems above in Section~\ref{sec:antiHamTD}. The purpose of Section~\ref{sec:antiHamTD} is twofold -- first as tree-depth is more restrictive parameter than tree-width the proof is simpler and second, the analysis of a particular application of Bondy-Chvátal theorem gives more light on the structure of the complement of a graph $G$ that is closed under this closure operator.

\section{Preliminaries}

One of the basic graph operations is taking the complement of a graph. A {\em complement} of a graph $G=(V,E)$ is denoted by $\cG$ and it is the graph $(V, {V \choose 2} \setminus E)$. Throughout the paper we denote by $n$ the number of vertices in the input graph. By the distance between two vertices $u,v$ we mean the length of the shortest path between them in the assumed graph $G$.
We denote the distance by $\dist_G(u,v)$ and omit the subscript if the graph is clear from the context. We extend the notion to sets of vertices in a straightforward manner, that is $\dist_G(U,W) = \min\{\dist_G(u,w) \colon u\in U,\,w\in W\}.$ For further graph related notation we refer reader to the monograph by Diestel~\cite{diestelGT}.

In our approach we repeatedly use the closure theorem of Bondy and Chvátal to increase the number of edges in the complement of a graph (i.e.\ to reduce the number of edges in the given graph). Note that this operation does not increase the tree-width of the input graph.

\begin{theorem}[Bondy-Chvátal closure~\cite{bondyCh76}]
Let  $G = (V,E)$ be a graph of order $|V|\geq3$ and suppose that  $u$ and $v$ are distinct non-adjacent vertices such that $\deg(u)+\deg(v)\ge |V|$.
Now $G$ has a Hamiltonian path if and only if ${(V,E\cup \{u,v\})}$ has a Hamiltonian path.
\end{theorem}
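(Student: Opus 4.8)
The plan is to prove the two implications separately, with the forward one immediate and the backward one carrying all the content. If $G$ already has a Hamiltonian path, then the very same path is a Hamiltonian path of the supergraph $(V,E\cup\{uv\})$, because adding an edge never destroys an existing path. So the real work is the converse, where a Hamiltonian path of $(V,E\cup\{uv\})$ might genuinely use the new edge $uv$ and we must reroute to avoid it.

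For the converse I would reduce the Hamiltonian-\emph{path} statement to the Hamiltonian-\emph{cycle} version of the same closure, using the apex-vertex device that appears elsewhere in the paper. Let $G^{+}$ be $G$ together with one new vertex $w$ joined to every vertex of $V$. A standard observation is that $G$ has a Hamiltonian path if and only if $G^{+}$ has a Hamiltonian cycle: prepend and append $w$ to a path, and delete $w$ from a cycle. Applying the same equivalence to $G+uv$ shows that $G+uv$ has a Hamiltonian path if and only if $(G+uv)^{+}=G^{+}+uv$ has a Hamiltonian cycle. Hence it suffices to prove that $G^{+}$ and $G^{+}+uv$ have Hamiltonian cycles simultaneously. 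The point of passing to $G^{+}$ is that the degrees of $u$ and $v$ each grow by one while the order grows by one, so the hypothesis $\deg_G(u)+\deg_G(v)\ge n$ (with $n=|V|$) upgrades to $\deg_{G^{+}}(u)+\deg_{G^{+}}(v)\ge n+2>|V(G^{+})|$, while $u$ and $v$ stay non-adjacent; this strict slack is exactly what the cycle argument will consume.

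The heart of the proof is the cycle-closure argument carried out inside $G^{+}$. Assume for contradiction that $G^{+}+uv$ has a Hamiltonian cycle but $G^{+}$ does not. Then every Hamiltonian cycle of $G^{+}+uv$ must traverse $uv$; deleting that edge from one such cycle yields a Hamiltonian $u$--$v$ path $u=x_1,x_2,\dots,x_{m}=v$ of $G^{+}$, where $m=|V(G^{+})|=n+1$. Now run the rotation step: set $S=\{\,i:ux_i\in E(G^{+})\,\}$ and $T=\{\,i:vx_{i-1}\in E(G^{+})\,\}$. If some index $i$ lies in $S\cap T$, then reconnecting the path as $x_1,x_i,x_{i+1},\dots,x_m,x_{i-1},x_{i-2},\dots,x_1$ produces a Hamiltonian cycle of $G^{+}$ that avoids $uv$, contradicting our assumption; hence $S\cap T=\emptyset$. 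Because the path is Hamiltonian we have $|S|=\deg_{G^{+}}(u)$ and $|T|=\deg_{G^{+}}(v)$, and a quick check of the index ranges (the loop index $1$ is excluded from $S$, the index $n+1$ is excluded from $S$ since $uv\notin E(G^{+})$, and index $2$ is excluded from $T$ for the same reason) places $S\cup T$ inside a set of size $m-1$. Disjointness then forces $\deg_{G^{+}}(u)+\deg_{G^{+}}(v)=|S|+|T|\le m-1=n<n+2\le|S|+|T|$, a contradiction.

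The main obstacle, and the only genuinely non-routine point, is this final pigeonhole step: one must fix the index sets $S$ and $T$ with precisely the right ranges so that their disjointness becomes incompatible with the degree-sum bound, and one must verify that the reconnected sequence is indeed a Hamiltonian cycle avoiding $uv$. Everything else is bookkeeping: the trivial forward direction, the apex equivalence between Hamiltonian paths in $G$ and Hamiltonian cycles in $G^{+}$, and the transfer of the degree hypothesis under the apex operation.
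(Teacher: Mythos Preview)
The paper does not prove this theorem at all; it is quoted from Bondy and Chv\'atal~\cite{bondyCh76} as a preliminary tool, with no argument supplied. So there is no in-paper proof to compare your proposal against.

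Your argument is correct and is essentially the classical one: reduce the Hamiltonian-path statement to the Hamiltonian-cycle closure via the apex vertex (the same device the paper mentions in connection with $L(2,1)$-labelling), and then run the P\'osa-style crossover on a Hamiltonian $u$--$v$ path of $G^{+}$. The index bookkeeping you sketch is right: $S\subseteq\{2,\dots,m-1\}$, $T\subseteq\{3,\dots,m\}$, hence $S\cup T\subseteq\{2,\dots,m\}$ has size at most $m-1$, and disjointness forces $\deg_{G^{+}}(u)+\deg_{G^{+}}(v)\le m-1=n$, contradicting the transferred hypothesis. One small remark: the paper's hypothesis $\deg(u)+\deg(v)\ge |V|$ is one unit stronger than what your proof actually consumes, since the contradiction already fires once $\deg_{G^{+}}(u)+\deg_{G^{+}}(v)\ge n+1$, i.e.\ once $\deg_G(u)+\deg_G(v)\ge n-1$; so your ``strict slack'' is real but not tight.
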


The notion of \emph{tree-width} was introduced by Bertelé and Brioshi~\cite{berteleB73}.
\begin{definition}[Tree decomposition]
A \emph{tree decomposition} of a graph $G$ is a pair $(T,X)$, where ${T=(I,F)}$ is a tree, and $X=\{X_i\mid i\in I\}$ is a family of subsets of $V(G)$ (called bags) such that:
	\begin{itemize}
		\item the union of all $X_i$, $i\in I$ equals $V$,
		\item for all edges $\{v,w\}\in E$, there exists $i\in I$, such that $v,w\in X_i$ and
		\item for all $v\in V$ the set of nodes $\{i\in I\mid v\in X_i\}$ forms a subtree of $T$.
	\end{itemize}
\end{definition}
The \emph{width} of the tree decomposition is $\max(|X_i|-1)$.
The \emph{tree-width} of a graph $\tw{(G)}$ is the minimum width over all possible tree decompositions of the graph~$G$.

\begin{proposition}[\cite{KloksTW}]
Let~$G$ be a~graph with $n$ vertices. There exists an optimal tree decomposition with $O(n)$ bags. Moreover, there is an \FPT algorithm that finds such a decomposition.
\end{proposition}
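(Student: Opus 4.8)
The statement has two parts, which I would treat separately. For the bound on the number of bags, the plan is to start from \emph{any} optimal tree decomposition $(T,X)$ of width $w=\tw(G)$ and clean it to make it \emph{irredundant}: as long as $T$ contains an edge $\{i,j\}$ with $X_i\subseteq X_j$, contract that edge and keep the bag $X_j$. I would first verify that this contraction yields a valid tree decomposition of width at most $w$. The vertex- and edge-covering conditions are immediate since no vertex and no pair are lost (everything in $X_i$ already lies in $X_j$), the width cannot increase because the surviving bag is the larger one, and the connectivity condition survives because merging two nodes whose bags are nested can only shorten, never disconnect, the subtree $\{i : v\in X_i\}$ of any vertex $v$. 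Each contraction strictly decreases $|I|$, so the process terminates in a decomposition in which \emph{no bag is contained in an adjacent bag}.

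Next I would bound the size of such an irredundant decomposition by $n+1$. Root $T$ at an arbitrary node $r$, and for each $v\in V(G)$ let $\operatorname{top}(v)$ be the unique node of the connected set $\{i : v\in X_i\}$ closest to $r$ (unique by the subtree condition). The key observation is that every non-root node $i$ admits a vertex $v\in X_i\setminus X_{p(i)}$, where $p(i)$ is the parent of $i$, since otherwise $X_i\subseteq X_{p(i)}$, contradicting irredundancy. For such a $v$ the set $\{j : v\in X_j\}$ contains $i$ but not $p(i)$; as every path from $i$ to $r$ runs through $p(i)$, this set lies entirely in the subtree rooted at $i$, whence $\operatorname{top}(v)=i$. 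Choosing one witness $v$ for each non-root node $i$ therefore defines a map into $V(G)$ that is injective, because $\operatorname{top}(v)=i$ recovers $i$ from $v$. This gives at most $n$ non-root nodes, hence at most $n+1=\bigO{n}$ bags, and both the cleaning step and this analysis run in time linear in the size of the decomposition.

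For the algorithmic part I would invoke Bodlaender's theorem: there is an algorithm that, given $G$ and an integer $w$, in time $2^{\bigO{w^3}}\cdot n$ either returns a tree decomposition of width at most $w$ or certifies $\tw(G)>w$. Running it for $w=1,2,\dots$ until it first succeeds identifies $\tw(G)$ and produces an optimal tree decomposition; applying the linear-time cleaning step above then reduces it to $\bigO{n}$ bags without increasing the width. The composed procedure is \FPT when parameterized by $\tw(G)$, which establishes the ``moreover'' clause.

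I expect the genuinely delicate point to be the counting argument of the second paragraph, in particular checking that contraction preserves the subtree condition and that the $\operatorname{top}$-map is well defined and injective; the remaining steps are either direct verifications of the tree-decomposition axioms or a black-box appeal to the (highly nontrivial but standard) algorithm of Bodlaender.
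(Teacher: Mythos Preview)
The paper does not supply its own proof of this proposition; it is stated as a cited result from Kloks' monograph and is used as a black box later in the tree-width argument. There is therefore nothing in the paper to compare your proposal against.

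That said, your argument is correct and is essentially the standard one. The contraction to an irredundant decomposition is sound (contracting a tree edge preserves connectedness of every vertex's bag-set, and keeping the larger bag preserves the covering and width conditions), and the $\operatorname{top}$ injection cleanly gives the $n+1$ bound. For the algorithmic clause, invoking Bodlaender's linear-time algorithm and iterating over $w$ is a valid \FPT procedure; one could also note that Bodlaender's algorithm as stated already produces a decomposition with $\bigO{n}$ nodes, so the post-processing step is not strictly necessary, but it does no harm.
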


\begin{definition}[Tree-depth~\cite{nesetrildM12}]
The closure $Clos(F)$ of a forest $F$ is the graph obtained from $F$
by making every vertex adjacent to all of its ancestors. The {\em tree-depth}, denoted as $\td(G)$, of
a graph $G$ is one more than the minimum height of a rooted forest $F$ such that
$G\subseteq Clos(F).$ 
\end{definition}

The last graph parameter needed in this work is the \emph{neighborhood diversity} introduced by Lampis~\cite{lampis12}.
\begin{definition}[Neighborhood diversity]
The \emph{neighborhood diversity} of a graph $G$ is denoted by $\nd{(G)}$ and it is the minimum size of a partition of vertices into classes such that all vertices in the same class have the same neighborhood, i.e.\ ${N(v)\setminus\{v'\}=N(v')\setminus\{v\}}$, whenever
	$v,v'$ are in the same class.
\end{definition}
It can be easily verified that every class of a neighborhood diversity partition is either a clique or an independent set.
Moreover, for every two distinct classes $C,C'$, either every vertex in $C$ is adjacent to every vertex in $C'$,
or there is no edge between $C$ and $C'$. If classes $C$ and $C'$ are connected by edges,
we refer to such classes as \emph{adjacent}. 

It is possible to find the optimal neighborhood diversity decomposition of a given graph in polynomial time~\cite{lampis12}.

\section{Tree-depth}\label{sec:antiHamTD}

We show that using the Bondy-Chvátal closure it is possible on input $G$ and $k$ either decide that $\cG$ is $k$-path coverable or return graph $\overline{H}$ that is equivalent (for the $k$-path coverability) to graph $\cG$ with $\nd(\overline{H})\le 2^{2d} + 2d$.
Furthermore, it is possible to use the \FPT algorithm of Lampis~\cite{lampis12} on the resulting graph to decide whether it is $k$-path coverable or not. This in turn gives an \FPT algorithm for \kAntiPathCover with respect to tree-depth.

\paragraph{Applying Bondy-Chvátal.}
We will apply the Bondy-Chvátal closure from the leaves of a tree-depth decomposition of the input graph $G$ in order to reduce the number of edges in $G$ and either resolve the given question (in the case $G$ becomes an edgeless graph) or impose a structure on the complement of $G$ (after the removal of several edges). Note that every leaf of the decomposition has at most $\td(G)$ neighbors and that the set of leaves spans an edgeless subgraph of $G$ (a clique in \cG). We apply the Bondy-Chvátal closure to a vertex $v$ and all leaves beneath $v$ (denote these as $L$) in the tree-depth decomposition tree. We choose $v$ such that the distance between $L$ and $v$ is 1. We denote $\ell$ the number of nodes, that is $\ell = |L|$.
We denote a {\em height of vertex} $v$ in the tree-depth decomposition as $h(v)$ and define it as follows. Height of a root is set to 0 and for a vertex $v$ let $u$ denote the closest ancestor of $v$ in the tree-depth decomposition we set $h(v) = h(u) + 1$.
Observe that it is possible to add all edges between $L$ and $v$ to~$\cG$ if
$$
n - h(v) - 1 + n - h(v) - \ell \ge n.
$$
That is equivalent to $n > 2h(v) + \ell$.

\begin{lemma}\label{lem:antiHamTDboundND}
Denote $H$ the graph after application of the closure. We claim that $\nd(\overline{H}) \le 2^{2d} + 2d$, where $d = \td(G)$.
\end{lemma}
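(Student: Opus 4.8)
The plan is to work directly with the completed Bondy--Chvátal closure of $\cG$ rather than with the particular bottom-up sweep along leaves, exploiting only the defining property of the closure together with the ancestor--descendant structure of a tree-depth decomposition. Fix a rooted forest $F$ witnessing $\td(G)=d$, so that $G\subseteq Clos(F)$, every root-to-leaf path has at most $d$ vertices, and every vertex $x$ has at most $d-1$ ancestors. Since each edge of $G$ joins two comparable vertices of $F$, the $G$-neighbours of $x$ lie among its ancestors and its descendants; writing $D_x$ for the descendant set, we get $\deg_G(x)\le h(x)+|D_x|$, hence in $\cG$, and therefore in $\overline{H}$ as well (the closure only adds edges), $\deg_{\overline{H}}(x)\ge n-1-h(x)-|D_x|$. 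In particular every leaf has degree at least $n-d$ in $\overline{H}$.

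The first real step is to record the closure property: I would take $\overline{H}$ to be the full Bondy--Chvátal closure, which is order-independent and, by iterating the closure theorem, is $k$-path-equivalent to $\cG$. By definition of the closure, no two non-adjacent vertices of $\overline{H}$ have degree sum at least $n$, so every non-edge $\{x,y\}$ satisfies $\deg_{\overline{H}}(x)+\deg_{\overline{H}}(y)\le n-1$. A non-edge of $\overline{H}$ is also a non-edge of $\cG$, hence an edge of $G$, hence joins comparable vertices; let $x$ be the ancestor, so $D_y\subseteq D_x$. Substituting the degree lower bounds and using $h(x),h(y)\le d-1$ gives $n-1\le h(x)+h(y)+|D_x|+|D_y|\le 2(d-1)+2|D_x|$, that is $|D_x|\ge (n-2d+1)/2$. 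Thus the ancestor endpoint of every surviving non-edge is \emph{heavy}: it carries at least $(n-2d+1)/2$ descendants.

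Let $S$ be the set of heavy vertices. The key counting step is $|S|\le 2d$: heavy vertices at a common height are pairwise incomparable and hence have pairwise disjoint descendant sets, each of size at least $(n-2d+1)/2$; for $n>6d-3$ three such sets would already exceed $n$, so each of the $d$ height levels carries at most two heavy vertices. (The finitely many small instances $n\le 6d-3$ are trivial, since there $\nd(\overline{H})\le n\le 2^{2d}+2d$.) I would then finish by splitting $V$ into $S$ and $V\setminus S$. Every non-edge of $\overline{H}$ has its ancestor endpoint in $S$, so a vertex $y\notin S$ can be non-adjacent only to ancestors lying in $S$; its non-neighbour set is therefore a subset of $S$. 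Two vertices of $V\setminus S$ with the same non-neighbour subset of $S$ have identical neighbourhoods up to themselves and so lie in one class, giving at most $2^{|S|}\le 2^{2d}$ classes outside $S$; adding the at most $|S|\le 2d$ singleton classes inside $S$ yields $\nd(\overline{H})\le 2^{2d}+2d$.

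I expect the main obstacle to be the heavy-vertex count, namely calibrating the threshold $(n-2d+1)/2$ so that it simultaneously survives the degree estimates (these are only lower bounds, so one must check the closure inequality still bites) and forces at most two heavy vertices per level. A secondary technical point is justifying that one may argue with the completed closure at all—that it is well defined independently of the processing order and remains $k$-path-equivalent to $\cG$—rather than only with the single leaf-to-parent sweep described before the statement.
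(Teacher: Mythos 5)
Your proof is correct, but it takes a genuinely different route from the paper's. The paper argues procedurally: it analyzes the stopping condition of the explicit leaf-to-parent sweep described just before the lemma, noting that when no parent $v$ with leaf set $L$ beneath it admits a further application, $n \le 2h(v) + \ell$ holds for every nonleaf vertex, so the (current) leaves form a clique in $\overline{H}$ and at most $2h \le 2d$ vertices lie outside it; the neighborhood diversity bound then follows from this distance-to-clique bound exactly as in your final counting step. Your argument is static rather than procedural: you pass to the full, order-independent closure, combine its defining property (every non-edge has degree sum at most $n-1$) with the forest structure of tree-depth ($G$-edges join comparable vertices, $\deg_G(x) \le h(x) + |D_x|$), conclude that the ancestor endpoint of every surviving non-edge is \emph{heavy}, and bound the heavy set $S$ by $2d$ via disjointness of descendant sets on each of the $d$ levels, which makes $V \setminus S$ a clique. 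This is essentially the same strategy the paper itself uses for the tree-width case (Theorem~\ref{thm:treewidth}), where the role of your heavy vertices is played by the set $V_2$ of vertices of $G$-degree at least $n/2$, bounded by double counting over a tree decomposition; you have in effect specialized that proof to tree-depth, which buys a cleaner and more self-contained argument than the paper's rather sketchy process analysis (you also treat the small-$n$ regime explicitly, which the paper glosses over). Two caveats are worth recording. First, your $\overline{H}$ (the full closure) may properly contain the graph produced by the paper's sweep, and neighborhood diversity is not monotone under adding edges, so formally you prove the lemma under a different --- though arguably more natural --- reading of ``the graph after application of the closure''; for the downstream \FPT theorem this is immaterial, since the full closure is likewise computable in polynomial time and equivalent to \cG{} for the covering question. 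Second, your parenthetical ``$k$-path-equivalent'' needs the same qualification as in the paper: the Bondy-Chv\'{a}tal theorem preserves Hamiltonian paths only, and $k$-path coverability is handled by first adding $k$ isolated vertices (the apex reduction from the introduction), not by the closure theorem alone.
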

\begin{proof}
It follows that if the application process stops, then $n\le 2h(v) + \ell$ must hold for all nonleaf vertices. Take $h = \max\{h(v)\colon v \textrm{ nonleaf}\}.$
Note that all (actual) leaves of $H$ form a clique in $\overline{H}$ and thus, $\overline{H}$ is a graph on $n \le 2h+\ell$ vertices with $K_\ell$ as a subgraph. This in turn yields that the distance to clique (the number of vertices to delete such that the resulting graph is a clique) of $\overline{H}$ is at most $2h$. Thus, the neighborhood diversity of $\overline{H}$ is at most $2^{2h} + 2h$. This follows from the fact that there are at most $2^{2d}$ different neighborhoods from the point of view of a clique vertex. This together with trivial fact $h \le \td(G)$ completes the proof.
\end{proof}

Lemma~\ref{lem:antiHamTDboundND} yield the following corollary when known algorithms for finding Hamiltonian path are applied to the resulting graph $\overline{H}$.

\begin{theorem}
The \kAntiPathCover problem admits an \FPT algorithm with respect to parameterization by the tree-depth of the input graph. \qed
\end{theorem}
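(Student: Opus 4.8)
The plan is to assemble the final \FPT algorithm from the structural bound of Lemma~\ref{lem:antiHamTDboundND} together with the reduction to Hamiltonian path and a known neighborhood-diversity algorithm. First I would invoke the reduction sketched in the introduction: adding $k$ isolated vertices to $G$ turns them into $k$ apex vertices of $\cG$, so $\cG$ is $k$-path coverable if and only if this augmented complement has a Hamiltonian path; crucially this changes neither $\td(G)$ nor the complement structure we rely on. Hence it suffices to decide Hamiltonicity of a single complement graph, which is exactly the setting in which the Bondy-Chvátal closure theorem applies.

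Next I would compute a tree-depth decomposition of the (augmented) input of height $d = \td(G)$, which is possible in \FPT time parameterized by $d$, and then run the closure procedure described above: while some nonleaf vertex $v$ with its pendant leaves $L$ satisfies $n > 2h(v) + \ell$, add all edges between $v$ and $L$ to the complement. By the Bondy-Chvátal theorem each such addition preserves the existence of a Hamiltonian path in the complement, so the graph $\overline{H}$ obtained at the end is equivalent to $\cG$ for our question. Since every step adds at least one edge to the complement and there are at most ${n \choose 2}$ nonedges in total, the loop terminates after polynomially many rounds, and each round only needs degree comparisons, so the whole procedure runs in polynomial time.

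I would then split into two cases. If the procedure fills in every nonedge, $\overline{H}$ is complete and trivially admits a Hamiltonian path, so we answer \emph{yes}. Otherwise the loop halted because $n \le 2h(v)+\ell$ for all nonleaf $v$, and Lemma~\ref{lem:antiHamTDboundND} gives $\nd(\overline{H}) \le 2^{2d}+2d$. Finally, applying the \FPT algorithm of Lampis~\cite{lampis12} for Hamiltonian path (equivalently path cover) parameterized by neighborhood diversity to $\overline{H}$ decides the question; since $\nd(\overline{H})$ is bounded by a function of $d$ alone, the overall running time is \FPT in the tree-depth of the original instance.

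I expect the main obstacle to lie in the bookkeeping of the closure phase rather than in any deep argument. One must be sure that the condition $\deg(u)+\deg(v)\ge n$ is evaluated against the evolving complement (degrees only grow as edges are added, so once a pair qualifies it stays qualified), that the heights $h(v)$ used in the termination test remain valid throughout the process, and that the apex reduction is carried out before any edges are added, since Bondy-Chvátal speaks about Hamiltonian paths rather than about $k$-path cover directly. Once these points are settled, correctness is immediate from the cited closure theorem and Lemma~\ref{lem:antiHamTDboundND}.
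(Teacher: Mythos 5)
Your proposal is correct and follows essentially the same route as the paper: the apex-vertex reduction to {\sc Hamiltonian Anti-Path}, exhaustive application of the Bondy-Chv\'{a}tal closure along the tree-depth decomposition, then either answering \emph{yes} when the complement becomes complete or invoking Lemma~\ref{lem:antiHamTDboundND} and running Lampis's neighborhood-diversity \FPT algorithm on $\overline{H}$. The only difference is that you spell out the termination and bookkeeping details (monotonicity of the degree condition, polynomial number of closure rounds) that the paper leaves implicit.
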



\section{Tree-width}

In this section we restate and prove Theorem~\ref{thm:treewidth}.

\begin{theorem}[Restate Theorem~\ref{thm:treewidth}]
Let $G$ be a graph of tree-width $w$ and further complement $\cG$ closed under Bondy-Chvátal closure.
It follows that neighborhood diversity of $\cG$ is bounded by $2^{k} + k$ where $k = 2(w^2 + w)$.
\end{theorem}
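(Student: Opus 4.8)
The plan is to reduce the statement to a bound on the vertex cover number of $G$ and then invoke the same distance-to-clique estimate already used in the tree-depth section. First I would record that neighborhood diversity is invariant under complementation: for distinct $u,v$ and any third vertex $x$ we have $x\in N_G(u)\iff x\notin N_{\cG}(u)$, so $u,v$ share a neighborhood in $G$ exactly when they share one in $\cG$, whence $\nd(G)=\nd(\cG)$. Hence it suffices to bound $\nd(G)$. For that I would use the standard fact that $\nd(H)\le 2^{\vc(H)}+\vc(H)$ for every graph $H$: a minimum vertex cover $X$ leaves $V\setminus X$ independent, its vertices are sorted into at most $2^{|X|}$ classes by their neighborhoods inside $X$, and $X$ adds at most $|X|$ further classes. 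Since deleting a vertex cover of $G$ makes $\cG$ a clique, this is literally the distance-to-clique bound of the tree-depth section. Thus the theorem follows once I show $\vc(G)\le k=2(w^2+w)$.

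The crucial step is to read the closure hypothesis as a degree condition on $G$. Since $\cG$ is closed, every pair of vertices non-adjacent in $\cG$ — equivalently, adjacent in $G$ — fails the closure inequality, i.e.\ $\deg_{\cG}(u)+\deg_{\cG}(v)\le n-1$. Substituting $\deg_{\cG}(x)=n-1-\deg_G(x)$ and rearranging turns this into: for every edge $uv\in E(G)$ one has $\deg_G(u)+\deg_G(v)\ge n-1$. In particular every edge of $G$ has an endpoint of degree at least $(n-1)/2$, so the high-degree set $H=\{v:\deg_G(v)\ge (n-1)/2\}$ is a vertex cover of $G$.

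It remains to bound $|H|$, and here tree-width enters only through sparsity: a graph of tree-width $w$ has at most $wn$ edges, so $\sum_v \deg_G(v)=2|E(G)|\le 2wn$. Since each vertex of $H$ contributes at least $(n-1)/2$ to this sum, $|H|\cdot (n-1)/2\le 2wn$, giving $|H|=\bigO{w}$ (indeed $|H|\le 4wn/(n-1)$, which already equals $4=2(w^2+w)$ when $w=1$). This is comfortably below the stated $k=2(w^2+w)$ for every $w$, with the finitely many tiny graphs handled by the trivial bound on $\vc$. Combining $\vc(G)\le |H|\le k$ with the first paragraph yields $\nd(\cG)\le 2^{k}+k$.

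The \textbf{main obstacle} is the translation in the middle paragraph: recognizing that ``closed under Bondy--Chvátal'' is precisely a lower bound on the degree sums of the edges of $G$, and hence that the high-degree vertices cover all edges. Everything afterwards is routine double counting against the tree-width edge bound, and the neighborhood-diversity estimate is the same clique-plus-few-vertices argument as in Lemma~\ref{lem:antiHamTDboundND}. The only bookkeeping to watch is the boundary case $\deg_G(u)+\deg_G(v)=n-1$ around the threshold $(n-1)/2$ together with the small-$n$ regime; both are absorbed by the slack in the constant $2(w^2+w)$.
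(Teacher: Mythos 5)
Your proposal is correct, and its core coincides with the paper's: you read ``$\cG$ closed under Bondy--Chv\'atal'' as the degree condition that every edge $uv\in E(G)$ satisfies $\deg_G(u)+\deg_G(v)\ge n-1$, conclude that the high-degree vertices form a vertex cover of $G$, and observe that a vertex cover of $G$ of size at most $k$ forces $\nd(\cG)\le 2^k+k$. (Your arithmetic here is in fact more careful than the paper's, which states the condition as $\deg_G(u)+\deg_G(v)\ge n$ with threshold $n/2$; the correct constants are $n-1$ and $(n-1)/2$, exactly as you have them.) Where you genuinely diverge is the final counting step. The paper bounds the number of high-degree vertices by double counting incidences $(v,X_i)$ in a tree decomposition with $O(n)$ bags: a vertex of degree at least $n/2$ must appear in at least $n/(2w)$ bags, while the total number of incidences is at most $n(w+1)$, yielding the quadratic bound $2w(w+1)$. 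You instead use only the sparsity consequence of tree-width ($|E(G)|\le wn$) and a degree-sum argument, which gives roughly $4w$ high-degree vertices --- a simpler argument and a \emph{linear}-in-$w$ bound, which would actually support a stronger statement than the theorem claims. The price is the boundary bookkeeping you flag: your displayed inequality $|H|\le 4wn/(n-1)$ strictly exceeds $4w$, so it does not ``equal $4$ when $w=1$'' as written; but the slack is real, since for $n\ge 4w+2$ one gets $4wn/(n-1)<4w+1$ and hence $|H|\le 4w\le 2(w^2+w)$ by integrality (for $w\ge 1$), while for $n\le 4w+1$ the trivial bound $\vc(G)\le n-1\le 4w$ suffices, and $w=0$ is degenerate. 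Finally, your opening observation that $\nd(G)=\nd(\cG)$ is a clean substitute for the paper's direct ``distance to clique in $\cG$'' formulation; the two are equivalent because a vertex cover of $G$ is precisely a set whose deletion makes $\cG$ complete.
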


\begin{proof}
We will prove the graph $\cG$ has a clique $C$ of size at least $n - 2w^2 - w$.
Thus, the graph $\cG$ has at most $2(w^2 + w)$ vertices which are not in $C$.
The bound of $\nd(\cG)$ follows.

Since $\cG$ is closed under Bondy-Chvátal closure, for every edge $\{u,v\} = e \in E(G)$ holds that $\deg_G(u) + \deg_G(v) \geq n$.
Otherwise, it would holds $\deg_{\cG}(u) + \deg_{\cG}(v) \geq n$ and we could add the edge $e$ into $E(\cG)$.
Thus, for every edge $e \in E(G)$ there exists a vertex $v \in e$ such that $\deg_G(v) \geq \frac{n}{2}$.
Let $f: E(G) \rightarrow V(G)$ be a function such that for every $v = f(e)$ holds that $v \in e$ and $\deg_G(v) \geq \frac{n}{2}$.
Let $V_1 = \{f(e) | e \in E(G) \}$.
Note that $V_1$ is a vertex cover of the graph $G$.
Thus, if we remove the set $V_1$ from the graph $\cG$ we obtain a clique.
Moreover, $V_1 \subseteq V_2 = \{v \in V(G) | \deg_G(v) \geq \frac{n}{2}\}$.

It remains to prove that $|V_2| \leq 2(w^2 + w)$.
Let ${\cal T} = (T,X)$ be a tree decomposition of $G$ such that width of ${\cal T}$ is $w$ and $T$ has $n$ nodes.
Let $p$ be a number of all ordered pairs $(v, X_i)$ where $v \in V(G), X_i \in X$ and $v \in X_i$.
We use double counting for $p$.
Since $T$ has at most $n$ nodes and all bags in $X$ contains at most $w + 1$ vertices of $G$, we have
\begin{equation}
\label{eq:UpperBound}
 p \leq n(w + 1).
\end{equation}

Let $v \in V_2$ and $E_v = \{e \in E(G) | v \in e\}$.
Note that $|E_v| = \deg_G(v) \geq \frac{n}{2}$.
Every edge of $G$ has to be in some bag in $X$.
However, there can be only $w$ edges from $E_v$ in one bag in $X$.
Thus, edges from $E_v$ and also the vertex $v$ have to be in at least $\frac{n}{2w}$ bags from $X$.
Therefore, we have lower bound
\begin{equation}
\label{eq:LowerBound}
 |V_2|\frac{n}{2w} \leq p.
\end{equation}

When we join Inequality~\ref{eq:UpperBound} and Inequality~\ref{eq:LowerBound} we get the right upper bound for $V_1$ and $V_2$
\[
 |V_1| \leq |V_2| \leq 2(w^2 + w).
\]
\end{proof}

\section{Conclusions}

We have proven that even through apparently there is no structure in terms of neighborhood diversity on the complements of sparse graphs (having bounded tree-width or tree-depth), the structure after exhaustive application of Bondy-Chvátal closure can be exploited -- the complement has bounded neighborhood diversity.

We would like to ask several vague questions here.
\begin{itemize}
  \item Is it possible to use other graph closure operators to show a connection between tree-width and neighborhood diversity or modular width?
  \item Is it possible to exhibit closer connection between tree-width and modular width trough graph complements?
  \item Does any other non-\MSOt problem besides {\sc $k$-Anti-Path Cover} admit an \FPT algorithm on a graph with  bounded tree-width?
  \item When one assumes parameterization by the tree-width of an input graph it is convenient to approach the problem by the famous theorem of Courcelle~\cite{courcelle90}. Is it possible to extend the theorem for \MSOt for the complementary setting -- i.e.\ to allow quantification over sets of non-edges?
\end{itemize}

\bibliographystyle{eptcs}
\bibliography{src/antiham}

\end{document}